\newcommand{\bea}{\begin{eqnarray}}
\newcommand{\eea}{\end{eqnarray}}
\def\bi{\begin{itemize}}
\def\ei{\end{itemize}}
\def\bc{\begin{center}}
\def\ec{\end{center}}
\def\C{\hbox{$\mit I$\kern-.7em$\mit C$}}
\def\R{\hbox{$\mit I$\kern-.6em$\mit R$}}
\def\tr{\mathrm{tr}}
\newtheorem{theorem}{Theorem}
\newtheorem{corollary}[theorem]{Corollary}
\begin{document}

\title{Maximally entangled mixed states for a fixed spectrum do not always exist}
\author{Julio I. de Vicente}\email{jdvicent@math.uc3m.es}
\affiliation{Departamento de Matem\'aticas, Universidad Carlos III de
Madrid, E-28911, Legan\'es (Madrid), Spain}
\affiliation{Instituto de Ciencias Matem\'aticas (ICMAT), E-28049 Madrid, Spain}

\begin{abstract}
Entanglement is a resource under local operations assisted by classical communication (LOCC). Given a set of states $S$, if there is one state in $S$ that can be transformed by LOCC into all other states in $S$, then this state is maximally entangled in $S$. It is a well-known result that the $d$-dimensional Bell state is the maximally entangled state in the set of all bipartite states of local dimension $d$. Since in practical applications noise renders every state mixed, it is interesting to study whether sets of mixed states of relevance enable the notion of a maximally entangled state. A natural choice is the set of all states with the same spectrum. In fact, for any given spectrum distribution on two-qubit states, previous work has shown that several entanglement measures are all maximized by one particular state in this set. This has led to consider the possibility that this family of states could be the maximally entangled states in the set of all states with the same spectrum, which should then maximize \emph{all} entanglement measures. In this work I answer this question in the negative: there are no maximally entangled states for a fixed spectrum in general, i.e.\ for every possible choice of the spectrum. In order to do so, I consider the case of rank-2 states and show that for particular values of the eigenvalues there exists no state that can be transformed to all other isospectral states not only under LOCC but also under the larger class of non-entangling operations. This in particular implies that in these cases the state that maximizes a given entanglement measure among all states with the same spectrum depends on the choice of entanglement measure, i.e.\ it cannot be that the aforementioned family of states maximizes all entanglement measures.
\end{abstract}

\maketitle


Entanglement is a strong form of correlation for quantum bipartite systems that has no analogue in classical physics. It plays a key role in the foundations of quantum mechanics and in quantum information theory, and it fuels quantum technologies to overcome the limitations of their classical counterparts. Thus, extensive work has been devoted over the last decades to characterize, classify and quantify the different forms in which this phenomenon manifests itself and to obtain protocols and understand the ultimate limitations for the manipulation of this resource (see e.g.\ the review articles \cite{review1,review2,review3}). All these efforts have given rise to the field of entanglement theory. One of the cornerstone results of this theory is the existence of a maximally entangled state, which provides the most useful form of entanglement and sets a gold standard to quantify this resource.

In more detail, entanglement theory is formulated over the paradigm of local operations and classical communication (LOCC). This is the most general form of quantum dynamics that can be implemented by spatially separated parties that do not exchange quantum communication. Thus, entanglement cannot be created by LOCC alone and this sets the basic rule to order the set of entangled states and to construct entanglement measures. If $\rho$ and $\sigma$ are bipartite quantum states and there exists an LOCC protocol $\Lambda$ such that $\Lambda(\rho)=\sigma$, then $\sigma$ cannot be more entangled than $\rho$ and every entanglement measure $E$ must satisfy that $E(\rho)\geq E(\sigma)$. In particular, for any non-entangled (i.e.\ separable) state $\rho$ and LOCC transformation $\Lambda$, $\Lambda(\rho)$ must remain so and it must hold that $E(\rho)=0$. Local unitary (LU) transformations are special examples of LOCC protocols that are moreover reversible. Therefore, the above entails that every entanglement measure $E$ fulfills that $E(\rho)=E(U_A\otimes U_B \rho U_A^\dag\otimes U_B^\dag)$ for arbitrary bipartite states $\rho$ and unitaries $U_A$ and $U_B$. Hence, LU-equivalent states have the same entanglement properties and, whenever we speak about state transformations in the following, a state should be understood as an equivalence class in the above sense. This formalism has enabled the derivation of a huge collection of different entanglement measures, many of them endowed with meaningful operational interpretations \cite{review1,review2,review3}. Moreover, its simplicity and conceptual elegance has been generalized giving rise to quantum resource theories \cite{resource}, which allow to study in a unified manner other forms of quantum advantage such as Bell nonlocality, coherence, quantum thermodynamics or stabilizer computation, to name a few.

However, unfortunately, the mathematical structure of the set of LOCC transformations is rather cumbersome \cite{locc} and the characterization of LOCC convertibility is in general a formidable problem. In this respect, Nielsen's theorem stands out as a landmark achievement providing a simple characterization of LOCC convertibility between pure but otherwise arbitrary bipartite states \cite{nielsen}. Among other important consequences, this result implies that the $d$-dimensional Bell state,
\begin{equation}
|\phi_d\rangle=\frac{1}{\sqrt{d}}\sum_{i=1}^d|ii\rangle,
\end{equation}
can be transformed by LOCC into any other pure bipartite state of the same dimensions. Since the set of LOCC maps is convex, this entails that it can be transformed as well to any state (i.e.\ including mixed states). Thus, $|\phi_d\rangle$ is the maximally entangled two-qudit state. This means that this state maximizes all entanglement measures among all such states and that it has to be the most useful state for any task to be implemented under the LOCC constraint independently of the particular goal to be accomplished.

Thus, in the LOCC scenario the generalized Bell state is the best resource one can aim for. Nevertheless, the starting point of this work is that in many situations one may not be able to prepare any state of choice. Suppose we are bound to a particular set of states $S$ (which does not include the maximally entangled state), can we define a notion of maximally entangled state in $S$? From the perspective of entanglement theory outlined above the answer is clear: given the set $S$, a state in $S$ is the maximally entangled state in $S$ if it can be transformed by LOCC into any other state in $S$. It is not clear in principle, however, which sets $S$ enable such a notion. Since in practice noise renders all states mixed and pure states are an idealization, a particularly relevant instance is the case where $S$ is a set of mixed states with a predefined mixedness structure. A particularly well-motivated choice is to consider the set of all states with the same spectrum as this characterizes all states that can be prepared by unitary evolution starting from any given noisy separable state \cite{unitary1,unitary2}. In fact, Nielsen's theorem corresponds to the case in which the spectrum is a delta distribution (i.e.\ rank equal to one), so this question is a natural generalization of the above. However, although Nielsen's work has been extended to characterize probabilistic \cite{prob} and approximate \cite{approx} transformations among pure states and transformations from pure states to ensembles \cite{ensemble} and mixed states \cite{mixed}, the case of transformations from and to mixed states seems to have remained unexplored territory.

Remarkably, following previous work in \cite{ishizaka}, Ref.~\cite{mems1} has proven that in the case of two-qubit states for any choice of the spectrum the same state maximizes three different and widely used entanglement measures: the entanglement of formation \cite{eof}, the relative entropy of entanglement \cite{ree} and the negativity \cite{negativity}. This suggests that this family of states could be the maximally entangled two-qubit states for any fixed spectrum. If this was the case, then they should maximize all entanglement measures within the corresponding set of states. 
However, the evaluation of entanglement measures for mixed states boils down to very hard optimization problems and even for two-qubit states explicit formulae are in general unavailable (with only a few notable exceptions such as the negativity and the entanglement of formation \cite{eof2}). Thus, no advancement has been made in this problem in the last years. In this Letter I will solve this question on the negative: there exists no maximally entangled states for a fixed spectrum in general. In order to prove this I will show that in the case where the rank is 2 there exist choices for the spectrum such that the states of \cite{mems1} cannot be transformed by LOCC into other isospectral states. Since the result of \cite{mems1} and the monotonicity of entanglement measures imply that these states cannot be obtained by LOCC from any other isospectral state either, this leads to the desired result. With this, one can construct an entanglement monotone that is not maximized by the states of \cite{mems1} among all states with the same spectrum. This solves Problem 5 in the Open Quantum Problems List maintained by the Institute for Quantum Optics and Quantum Information (IQOQI) in Vienna \cite{open1,open2}. Thus, this result shows that, unlike in the case of pure states, in the LOCC paradigm there is not a mixed state that is the most useful for all entanglement-related applications among all isospectral states and the best state one could prepare by unitary evolution on a noisy separable input is task-dependent (i.e.\ it depends on the entanglement measure one would like to maximize). Actually, I will prove something stronger. I will not only show that the above conversions are impossible by LOCC but also under the more general class of non-entangling (NE) transformations. Since this is the largest class of transformations that leads to a well-defined resource theory of entanglement in the one-shot scenario \cite{resource}, we therefore have that no such theory allows in general for maximally entangled states for a fixed spectrum. Interestingly, once the problem is approached from this perspective the proof is simple and relies only on elementary techniques. The idea of using outer approximations to LOCC in the study of entanglement theory is old and can be traced back at least to the work of Rains \cite{Rains1,Rains2,Rains3,Rains4}, which considers the classes of separable (SEP) and positive-partial-transpose (PPT) operations. For the particular case of NE transformations see \cite{harrownielsen,brandaoplenio1,brandaoplenio2,brandaodatta,patricia,beyondlocc,lamiregula1,brandaoplenio3,lamiregula2}.

In order to establish the aforementioned result, I begin by setting some notation. Throughout this Letter I will only consider two-qubit states and $\mathcal{D}$ and $\mathcal{S}$ will denote respectively the corresponding sets of states (density matrices) and separable states. That is, the elements of $\mathcal{D}$ are given by $4\times4$ unit-trace Hermitian positive semidefinite matrices and the elements of $\mathcal{S}$ by density matrices $\rho\in\mathcal{D}$ such that
\begin{equation}
\rho=\sum_ip_i|\psi_i\rangle\langle\psi_i|\otimes|\chi_i\rangle\langle\chi_i|,
\end{equation}
for some choice of convex weights $\{p_i\}$ and unit-norm vectors $|\psi_i\rangle,|\chi_i\rangle\in\mathbb{C}^2$ $\forall i$. The precise definition of the class of LOCC maps will not be needed here (the interested reader is referred to e.g.\ \cite{locc,locc2}). I will only use that this class is a subset of the class of NE maps. A completely positive and trace-preserving (CPTP) map $\Lambda:\mathcal{D}\to\mathcal{D}$ is NE if $\Lambda(\rho)\in\mathcal{S}$ $\forall\rho\in\mathcal{S}$. The four Bell states that are all LU-equivalent to the maximally entangled state $|\phi_2\rangle$ and give rise to the Bell basis are denoted by
\begingroup
\allowdisplaybreaks
\begin{align}\label{Bellvector}
|\Phi_1\rangle=\frac{1}{\sqrt{2}}(|00\rangle+|11\rangle)&,\quad |\Phi_2\rangle=\frac{1}{\sqrt{2}}(|00\rangle-|11\rangle),\nonumber\\
|\Phi_3\rangle=\frac{1}{\sqrt{2}}(|10\rangle+|01\rangle)&,\quad |\Phi_4\rangle=\frac{1}{\sqrt{2}}(|10\rangle-|01\rangle),
\end{align}%
\endgroup
and the corresponding density matrices by~$\Phi_i=|\Phi_i\rangle\langle\Phi_i|$ ($i\in\{1,2,3,4\}$).
A state $\rho\in\mathcal{D}$ is Bell-diagonal if it is diagonal in the Bell basis, i.e.\
\begin{equation}\label{Belldiagonal}
\rho=\sum_{i=1}^4p_i\Phi_i,
\end{equation}
where $\sum_ip_i=1$ and $p_i\geq0$ $\forall i$. These states are known to be entangled if and only if (iff) $\max_ip_i>1/2$ \cite{belldiagonalstates}. In fact, given any state $\rho\in\mathcal{D}$, its fully entangled fraction is defined as
\begin{equation}\label{entangledfraction}
F(\rho)=\max_{U_A,U_B}\tr(\rho U_A\otimes U_B\Phi_1U_A^\dag\otimes U_B^\dag),
\end{equation}
where the maximization is over arbitrary $2\times2$ unitary matrices $U_A$ and $U_B$, and if $F(\rho)>1/2$ then $\rho$ must be entangled \cite{isotropic}. Thus, in particular, $\tr(\rho\Phi_i)\leq1/2$ must hold for any $i\in\{1,2,3,4\}$ if $\rho$ is separable. For any given spectrum distribution on $\mathcal{D}$, $\vec{\lambda}=(\lambda_1,\lambda_2,\lambda_3,\lambda_4)$ arranged in non-increasing order and with $\sum_i\lambda_i=1$ and $\lambda_i\geq0$ $\forall i$, it was found in \cite{mems1} that the unique maximizer of the entanglement of formation, the relative entropy of entanglement and the negativity is given by (the LU-equivalence class of) the state
\begin{equation}\label{mems}
\rho_{\vec{\lambda}}=\lambda_1\Phi_1+\lambda_2|01\rangle\langle01|+\lambda_3\Phi_2+\lambda_4|10\rangle\langle10|.
\end{equation}
In the following we consider rank-2 instances of the states~(\ref{mems}) and (\ref{Belldiagonal}) denoted by
\begin{equation}
\rho_\lambda=\lambda\Phi_1+(1-\lambda)|01\rangle\langle01|,\quad \sigma_\lambda=\lambda\Phi_1+(1-\lambda)\Phi_3,
\end{equation}
where $\lambda\in[1/2,1]$. Notice that $\rho_\lambda$ and $\sigma_\lambda$ are isospectral for any choice of $\lambda$.

With this, I can now state and prove the main result of this Letter.
\begin{theorem}
There is no NE map $\Lambda$ such that $\Lambda(\rho_\lambda)=\sigma_\lambda$ for $\lambda\in(2/3,1)$.
\end{theorem}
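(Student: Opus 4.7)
The plan is to assume, for contradiction, that such an NE map $\Lambda$ exists and to exhibit a separable state $\tau$ whose image $\Lambda(\tau)$ violates the necessary condition $\tr(\rho\Phi_i)\le 1/2$ for separability stated just before the theorem. Since the hypothesis fixes $\Lambda$ only on $\rho_\lambda$, while NE controls $\Lambda$ on separable inputs, I would take $\tau$ of the form $\tau = q\rho_\lambda + (1-q)\eta$ with $\eta$ separable: linearity gives $\Lambda(\tau) = q\sigma_\lambda + (1-q)\Lambda(\eta)$ with $\Lambda(\eta)$ separable. Using $\tr(\sigma_\lambda\Phi_1) = \lambda$ and $\tr(\Lambda(\eta)\Phi_1)\ge 0$, the bound $\tr(\Lambda(\tau)\Phi_1)\le 1/2$ becomes informative once $q\lambda$ is pushed to the borderline $1/2$; I therefore aim for $q = 1/(2\lambda)$.

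The nontrivial step is designing $\eta$ so that $\tau$ is separable at $q = 1/(2\lambda)$, which for two qubits is equivalent to $\tau^{T_B}\ge 0$. Because $\rho_\lambda^{T_B}$ is block-diagonal with only the $\{|01\rangle,|10\rangle\}$ block non-diagonal, the natural choice is $\eta = p|01\rangle\langle 01| + (1-p)|10\rangle\langle 10|$, whose partial transpose only modifies the diagonal of that same block. The PPT condition then reduces to a single $2\times 2$ determinant; imposing that it vanish at $q = 1/(2\lambda)$ fixes $p = (3\lambda-2)/(2(2\lambda-1))$, and an elementary check shows $p,1-p\in(0,1)$ precisely when $\lambda\in(2/3,1)$. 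A pleasing feature is that the resulting $\tau$ works out to $\frac{1}{2}\Phi_1 + \frac{1}{4}|01\rangle\langle 01| + \frac{1}{4}|10\rangle\langle 10|$, independent of $\lambda$, whose separability can also be confirmed by a direct partial-transpose computation.

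From here the contradiction unwinds immediately. Separability of $\Lambda(\tau)$ gives $\tr(\Lambda(\tau)\Phi_1)\le 1/2$, and inserting $q\lambda = 1/2$ yields $p\,\tr(\Lambda(|01\rangle\langle 01|)\Phi_1) + (1-p)\tr(\Lambda(|10\rangle\langle 10|)\Phi_1)\le 0$; both summands being non-negative with $p,1-p>0$ forces $\tr(\Lambda(|01\rangle\langle 01|)\Phi_1) = 0$. Taking $\tr(\cdot\,\Phi_1)$ on the linearity identity $\sigma_\lambda = \lambda\Lambda(\Phi_1) + (1-\lambda)\Lambda(|01\rangle\langle 01|)$ then gives $\tr(\Lambda(\Phi_1)\Phi_1) = 1$, hence $\Lambda(\Phi_1) = \Phi_1$, and back-substitution yields $\Lambda(|01\rangle\langle 01|) = \Phi_3$, contradicting the separability of $\Lambda(|01\rangle\langle 01|)$ required by NE. The main obstacle I anticipate is the design of $\tau$: simpler choices like $\eta = |10\rangle\langle 10|$ alone keep $\tau$ separable only for $q$ strictly below $1/(2\lambda)$ when $\lambda>2/3$, so the two-term mixture with both weights strictly positive is essential, and the strict positivity of $p$ is exactly the arithmetic restriction to $\lambda>2/3$.
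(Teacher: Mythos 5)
Your proof is correct, and its key ingredient is exactly the paper's: the separable witness $\tau=\tfrac12\Phi_1+\tfrac14|01\rangle\langle01|+\tfrac14|10\rangle\langle10|$ together with the criterion that separable two-qubit states $\varrho$ satisfy $\tr(\Phi_1\varrho)\le 1/2$. The way you extract the contradiction, however, is genuinely different and in fact sharper. The paper first applies the $\le 1/2$ bound to $\Lambda(|01\rangle\langle01|)$ twice (against $\Phi_1$ and against $\Phi_3$), uses a saturation argument to pin down $\tr(\Phi_1\Lambda(\Phi_1))=(3\lambda-1)/(2\lambda)$ and $\tr(\Phi_1\Lambda(|01\rangle\langle01|))=1/2$, and only then feeds $\tau$ through $\Lambda$ to find $\tr(\Phi_1\Lambda(\tau))>1/2$ when $\lambda>2/3$. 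You instead apply the separability bound only once, to $\Lambda(\tau)$ itself, and use nothing beyond the non-negativity of $\tr(\Phi_1\Lambda(|01\rangle\langle01|))$ and $\tr(\Phi_1\Lambda(|10\rangle\langle10|))$ to force both to vanish; back-substitution then determines $\Lambda$ exactly on the two eigenstates of $\rho_\lambda$, giving $\Lambda(\Phi_1)=\Phi_1$ and $\Lambda(|01\rangle\langle01|)=\Phi_3$, the latter being an immediate violation of NE. Your decomposition $\tau=q\rho_\lambda+(1-q)\eta$ with $q=1/(2\lambda)$ also makes transparent where $\tau$ and the threshold $\lambda>2/3$ come from (positivity of the weight $p$), which the paper leaves implicit. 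Two minor points: the phrase ``precisely when $\lambda\in(2/3,1)$'' is slightly off, since at $\lambda=1$ one still has $p=1/2\in(0,1)$ --- the exclusion of $\lambda=1$ is instead needed when you divide by $1-\lambda$ in the back-substitution step; and the separability of $\tau$ also follows without any partial-transpose computation from its Bell-diagonal form, since $\tfrac14|01\rangle\langle01|+\tfrac14|10\rangle\langle10|=\tfrac14\Phi_3+\tfrac14\Phi_4$ and all Bell weights are $\le 1/2$.
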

\begin{proof}
We will assume that there exists a NE map $\Lambda$ such that $\Lambda(\rho_\lambda)=\sigma_\lambda$ and we will reach a contradiction in the case that $\lambda\in(2/3,1)$. By the linearity of $\Lambda$ we have that
\begin{equation}\label{Lambdalin}
\lambda\Lambda(\Phi_1)+(1-\lambda)\Lambda(|01\rangle\langle01|)=\lambda\Phi_1+(1-\lambda)\Phi_3.
\end{equation}
Therefore, taking $\tr(\Phi_1\sigma_\lambda)$ in this expression we obtain that
\begin{align}
\lambda&=\lambda\tr(\Phi_1\Lambda(\Phi_1))+(1-\lambda)\tr(\Phi_1\Lambda(|01\rangle\langle01|))\nonumber\\
&\leq\lambda\tr(\Phi_1\Lambda(\Phi_1))+\frac{(1-\lambda)}{2},\label{cond1}
\end{align}
where in the last line we have used that $\Lambda$ is NE and, hence, $\tr(\Phi_1\Lambda(|01\rangle\langle01|))\leq1/2$. This imposes that
\begin{equation}\label{phi+1}
\tr(\Phi_1\Lambda(\Phi_1))\geq\frac{3\lambda-1}{2\lambda}.
\end{equation}
Analogously,
\begin{align}
1-\lambda&=\lambda\tr(\Phi_3\Lambda(\Phi_1))+(1-\lambda)\tr(\Phi_3\Lambda(|01\rangle\langle01|))\nonumber\\
&\leq\lambda\tr(\Phi_3\Lambda(\Phi_1))+\frac{(1-\lambda)}{2},\label{cond2}
\end{align}
which leads to
\begin{equation}\label{phi+2}
\tr(\Phi_3\Lambda(\Phi_1))\geq\frac{1-\lambda}{2\lambda}.
\end{equation}
However, considering these two conditions together we see that
\begin{equation}
1\geq\tr[(\Phi_1+\Phi_3)\Lambda(\Phi_1)]\geq\frac{3\lambda-1}{2\lambda}+\frac{1-\lambda}{2\lambda}=1,
\end{equation}
which imposes that both Eqs.~(\ref{phi+1}) and~(\ref{phi+2}) must hold with equality. In turn, looking at Eqs.~(\ref{cond1}) and~(\ref{cond2}) and using that $\lambda\neq1$, this entails that
\begin{equation}\label{01}
\tr(\Phi_1\Lambda(|01\rangle\langle01|))=\tr(\Phi_3\Lambda(|01\rangle\langle01|))=\frac{1}{2}.
\end{equation}
With these restrictions we consider now the action of $\Lambda$ on
\begin{equation}\label{tau}
\tau=\frac{1}{2}\Phi_1+\frac{1}{4}\Phi_3+\frac{1}{4}\Phi_4=\frac{1}{2}\Phi_1+\frac{1}{4}|01\rangle\langle01|+\frac{1}{4}|10\rangle\langle10|,
\end{equation}
which is a Bell-diagonal state of the form~(\ref{Belldiagonal}) with $\max_ip_i\leq1/2$ and, consequently, separable. However, using Eqs.~(\ref{phi+1}) (which, as we now know, holds with equality) and~(\ref{01}) we arrive at
\begin{align}
\tr(\Phi_1\Lambda(\tau))&=\frac{3\lambda-1}{4\lambda}+\frac{1}{8}+\tr(\Phi_1\Lambda(|10\rangle\langle10|))\nonumber\\
&\geq\frac{3\lambda-1}{4\lambda}+\frac{1}{8}.
\end{align}
But if $\lambda>2/3$, this implies that $\tr(\Phi_1\Lambda(\tau))>1/2$ and, in turn, that $\Lambda(\tau)$ is entangled. Since $\tau$ is separable we have thus reached a contradiction with the assumption that $\Lambda$ is NE.
\end{proof}

The results announced in the introduction follow now immediately from Theorem 1.
\begin{corollary}
There is no maximally entangled mixed state for a fixed spectrum in general.
\end{corollary}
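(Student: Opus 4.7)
My plan is to reduce the corollary to Theorem~1 by invoking the uniqueness clause of~\cite{mems1}. I will fix some $\lambda\in(2/3,1)$, consider the set $S_\lambda$ of all two-qubit states whose spectrum is $(\lambda,1-\lambda,0,0)$, and show that no element of $S_\lambda$ can be transformed by LOCC into every other element of $S_\lambda$; this already refutes the existence of a maximally entangled state for a fixed spectrum in general.

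First I would suppose for contradiction that some $\omega\in S_\lambda$ can be mapped by LOCC to every other state of $S_\lambda$. In particular there is an LOCC channel sending $\omega$ to $\rho_\lambda$, so monotonicity of any entanglement measure $E$ under LOCC yields $E(\omega)\geq E(\rho_\lambda)$. Since $\omega\in S_\lambda$ and~\cite{mems1} singles out $\rho_\lambda$ as the unique maximizer (up to local unitaries) on $S_\lambda$ of the entanglement of formation, the relative entropy of entanglement and the negativity, the reverse inequality $E(\omega)\leq E(\rho_\lambda)$ also holds for these three measures. Forcing equality, the uniqueness clause compels $\omega$ to be LU-equivalent to $\rho_\lambda$.

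Because local unitaries are LOCC and preserve the set of LOCC-reachable states, I can replace $\omega$ by $\rho_\lambda$ without loss of generality. The hypothesis on $\omega$ then delivers an LOCC map, and hence \emph{a fortiori} a non-entangling map, taking $\rho_\lambda$ to $\sigma_\lambda$. This is precisely what Theorem~1 rules out in the regime $\lambda\in(2/3,1)$, yielding the desired contradiction.

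The only delicate point I foresee is the appeal to the uniqueness assertion of~\cite{mems1}: without it one could not exclude that a different element of $S_\lambda$ plays the role of maximally entangled state, and the reduction to Theorem~1 would break down. With uniqueness in hand, the remainder of the argument is routine bookkeeping using the inclusion LOCC~$\subseteq$~NE together with the LU-invariance of entanglement measures.
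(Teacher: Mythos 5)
Your argument is correct and follows essentially the same route as the paper: use the uniqueness of $\rho_\lambda$ as the maximizer of the measures in \cite{mems1} together with LOCC monotonicity to force any candidate maximally entangled state in $S_\lambda$ to be LU-equivalent to $\rho_\lambda$, and then invoke Theorem~1 (via LOCC $\subseteq$ NE) to rule out the transformation $\rho_\lambda\to\sigma_\lambda$. The only cosmetic difference is that you phrase it as a single contradiction while the paper states the two impossibilities ($\rho\to\rho_\lambda$ for $\rho$ not LU-equivalent to $\rho_\lambda$, and $\rho_\lambda\to\sigma_\lambda$) separately; the content is identical.
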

The reason for this is that Theorem 1 implies that for $\vec{\lambda}=(\lambda,1-\lambda,0,0)$ and $\lambda\in(2/3,1)$ the transformation $\rho_\lambda$ to the isospectral state $\sigma_\lambda$ is impossible by LOCC. On the other hand, the results of \cite{mems1} show that there exists an entanglement measure $E$ such that $E(\rho_\lambda)>E(\rho)$ for any $\rho\in\mathcal{D}$ with spectrum $(\lambda,1-\lambda,0,0)$ that is not LU-equivalent to $\rho_\lambda$. Consequently, by the monotonicity of entanglement measures under LOCC the transformation $\rho\to\rho_\lambda$ is impossible by LOCC for any such $\rho$. Therefore, there exists no state with spectrum $(\lambda,1-\lambda,0,0)$ and $\lambda\in(2/3,1)$ that can be transformed by LOCC into any other state with the same spectrum. Furthermore, Theorem 1 forbids the transformation $\rho_\lambda\to\sigma_\lambda$ under the more general class of NE operations. Analogously, no LU-inequivalent state to the state $\rho_\lambda$ can be transformed to it by NE operations since the above strict inequality holds choosing $E$ to be the relative entropy of entanglement, which is known to be monotonic not only under LOCC transformations but also under NE transformations (see Lemma IV.5 in \cite{brandaoplenio2}). Thus, there exists no maximally entangled mixed state for a fixed spectrum in general even in the most general resource theory of entanglement in which the free operations are relaxed from LOCC to NE.

\begin{corollary}
There exist choices of $\vec{\lambda}$ for which the states $\rho_{\vec{\lambda}}$ in Eq.~(\ref{mems}) do not maximize all entanglement monotones among all states with the same spectrum.
\end{corollary}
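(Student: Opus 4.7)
The plan is to exhibit an explicit entanglement monotone that attains a strictly larger value on $\sigma_\lambda$ than on $\rho_\lambda$ for some $\lambda\in(2/3,1)$. For the spectrum $\vec{\lambda}=(\lambda,1-\lambda,0,0)$ this immediately shows that $\rho_{\vec{\lambda}}$ fails to be a simultaneous maximizer of all entanglement monotones over its isospectral class, which is exactly the content of the corollary. The candidate I would use is the LOCC-reachability indicator of the fixed target $\sigma_\lambda$: the function $f:\mathcal{D}\to\{0,1\}$ with $f(\rho)=1$ if there exists an LOCC map $\Lambda$ such that $\Lambda(\rho)=\sigma_\lambda$, and $f(\rho)=0$ otherwise.

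The first step is to check that $f$ is non-increasing under LOCC. Suppose $\Lambda_0$ is LOCC and $f(\Lambda_0(\rho))=1$, so that some LOCC map $\Lambda_1$ realizes $\Lambda_1(\Lambda_0(\rho))=\sigma_\lambda$. Since LOCC is closed under composition, $\Lambda_1\circ\Lambda_0$ is itself LOCC and witnesses $f(\rho)=1$; by contrapositive, $f(\rho)=0$ forces $f(\Lambda_0(\rho))=0$, so $f$ is an entanglement monotone. The second step is to evaluate $f$ on both states. Trivially $f(\sigma_\lambda)=1$ through the identity map, while for $\lambda\in(2/3,1)$ Theorem 1 rules out any NE map taking $\rho_\lambda$ to $\sigma_\lambda$; since every LOCC protocol is NE, no LOCC map can either, and hence $f(\rho_\lambda)=0$. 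Combining these with the fact that $\sigma_\lambda$ and $\rho_\lambda=\rho_{\vec{\lambda}}$ share the spectrum $\vec{\lambda}=(\lambda,1-\lambda,0,0)$ shows that $\rho_{\vec{\lambda}}$ does not maximize $f$ in its isospectral class.

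I do not anticipate any real obstacle: the monotonicity of $f$ reduces to the closure of LOCC under composition, and the rest is a direct invocation of Theorem 1 together with the inclusion LOCC $\subseteq$ NE. If a real-valued monotone were preferred to the binary indicator, the same argument applies verbatim to the maximum success probability of a stochastic LOCC transformation from $\rho$ to $\sigma_\lambda$, which is a standard LOCC monotone and is again positive on $\sigma_\lambda$ while Theorem 1 forces it to vanish on $\rho_\lambda$.
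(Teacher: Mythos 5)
Your main argument follows essentially the same strategy as the paper --- build an ad hoc monotone out of ``convertibility to the fixed target $\sigma_\lambda$'' and use Theorem 1 to show it separates $\rho_\lambda$ from $\sigma_\lambda$ --- but with a different choice of function. The paper takes $E_\lambda(\rho)=p_{LOCC}(\rho\to\sigma_\lambda)$, the optimal conversion probability, invoking Vidal's result that this is an entanglement monotone for any fixed target, and then argues $E_\lambda(\rho_\lambda)\le p_{NE}(\rho_\lambda\to\sigma_\lambda)<1=E_\lambda(\sigma_\lambda)$, where the strict inequality needs not only Theorem 1 but also the closedness of the class of NE maps (so that a supremum equal to $1$ would yield an actual deterministic map). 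Your binary indicator $f$ sidesteps that topological step entirely, which is a genuine simplification. The price is that $f$ is only shown to be non-increasing under deterministic (trace-preserving) LOCC maps. The standard definition of an entanglement monotone in the sense of Vidal --- the sense in which the corollary and the cited reference use the term --- additionally requires non-increase \emph{on average} under selective LOCC, i.e.\ $E(\rho)\ge\sum_ip_iE(\sigma_i)$ for any LOCC instrument, and your $f$ can fail this: a state may be non-convertible deterministically to $\sigma_\lambda$ yet have a measurement branch that is convertible, in which case the average of $f$ over outcomes strictly increases. So either you should weaken the claim to ``monotone under deterministic LOCC'' (which still suffices to rule out a maximally entangled state in the isospectral class, but is not literally the statement of the corollary), or you should upgrade to the conversion probability as the paper does.

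Your fallback remark contains a concrete error: Theorem 1 does \emph{not} force the optimal stochastic LOCC conversion probability $p_{LOCC}(\rho_\lambda\to\sigma_\lambda)$ to vanish. It only rules out a deterministic NE (hence LOCC) map achieving the conversion, which bounds the probability strictly below $1$ --- and even that conclusion requires the compactness/closedness argument mentioned above, since a priori the supremum defining the probability could equal $1$ without being attained. The inequality you actually need is $p_{LOCC}(\rho_\lambda\to\sigma_\lambda)<1=p_{LOCC}(\sigma_\lambda\to\sigma_\lambda)$, not positivity versus zero.
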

Let $p_{LOCC}(\rho\to\sigma)$ and $p_{NE}(\rho\to\sigma)$ denote the optimal probability of converting the state $\rho$ to $\sigma$ by respectively LOCC and NE operations. It has been proven in \cite{vidalmonotones} that, for any fixed state $\sigma$, $p_{LOCC}(\rho\to\sigma)$ is an entanglement monotone as a function of $\rho$. Thus, $E_\lambda(\rho)=p_{LOCC}(\rho\to\sigma_\lambda)$ defines a family of entanglement monotones. The corollary follows because for $\lambda\in(2/3,1)$ while it obviously holds that $E_\lambda(\sigma_\lambda)=1$, we on the other hand have that
\begin{equation}
E_\lambda(\rho_\lambda)\leq p_{NE}(\rho_\lambda\to\sigma_\lambda)<1,
\end{equation}
where in the first inequality we have used that the class of LOCC operations is a subset of the class of NE operations and in the second inequality Theorem 1 and that the class of NE maps is closed. This last property follows from the well-known fact that the set of separable states is closed.

In summary, I have introduced here the problem of the existence of a maximally entangled state within a given set of states and I have studied in detail the particular case of isospectral states. While previous work had shown that the states~(\ref{mems}) maximize several important entanglement measures among all two-qubit states with the same spectrum, this Letter proves that this is not the case for all entanglement measures. In fact, it is demonstrated here that maximally entangled states for a fixed spectrum cannot exist in general in the standard resource theory of entanglement: there are choices of the spectrum where no state can be transformed by LOCC into all other states with the same spectrum. This situation is reminiscent of the case of multipartite pure states, where the analogous property holds \cite{mes}. However, in this latter case it has been shown that a notion of maximally entangled multipartite state can be obtained in more general resource theories where the class of allowed transformations is relaxed from LOCC to a bigger but still meaningful class \cite{patricia}. Unfortunately, this cannot be the case for bipartite mixed states with a fixed spectrum. Even though it is known that NE transformations enable bipartite state conversions that are impossible by LOCC \cite{brandaodatta,beyondlocc,brandaoplenio3,lamiregula2}, the impossibility proof obtained here not only applies to the case of LOCC but also to NE transformations, the largest class of allowed transformations that can be considered in a consistent resource theory of entanglement in the one-shot scenario.

Despite this negative result, several questions are left open for future investigation. The entanglement measure which is not maximized by the states~(\ref{mems}) is constructed ad hoc and it would be interesting to study whether there is a more physically relevant measure with this property. Is the fact that the states~(\ref{mems}) maximize for a given spectrum the entanglement of formation, the relative entropy of entanglement and the negativity an accident or is this a property of all entanglement measures with a certain structure? Also, the impossibility argument obtained here only covers certain particular values of the spectrum, do there exist particular spectrum distributions where a maximally entangled state exists beyond the case of rank-1 states (and the trivial case in which all states in the set are separable)? Are there other operationally or physically motivated choices of the set $S$ for which a maximally entangled mixed state in $S$ might exist? What are in general necessary and/or sufficient conditions for a set $S$ with this property to exist?

\begin{acknowledgments}
I acknowledge financial support from the Spanish Ministerio de Ciencia e Innovaci\'on (grant PID2020-113523GB-I00 and ``Severo Ochoa Programme for Centres of Excellence" grant CEX2023-001347-S funded by MCIN/AEI/10.13039/501100011033) and from Comunidad de Madrid (grant QUITEMAD-CM P2018/TCS-4342).
\end{acknowledgments}


\begin{thebibliography}{widest-label}

\bibitem{review1} M. B. Plenio and S. Virmani, Quant. Inf. Comput. \textbf{7}, 1 (2007).

\bibitem{review2} R. Horodecki, P. Horodecki, M. Horodecki, and K. Horodecki, Rev. Mod. Phys. \textbf{81}, 865 (2009).

\bibitem{review3} C. Eltschka and J. Siewert, J. Phys. A: Math. Theor. \textbf{47}, 424005 (2014).

\bibitem{resource} E. Chitambar and G. Gour, Rev. Mod. Phys. \textbf{91}, 025001 (2019).

\bibitem{locc} E. Chitambar, D. Leung, L. Mancinska, M. Ozols, and A. Winter, Commun. Math. Phys. \textbf{328}, 303 (2014).

\bibitem{nielsen} M. A. Nielsen, Phys. Rev. Lett. \textbf{83}, 436 (1999).

\bibitem{unitary1} W. D\"ur, G. Vidal, J. I. Cirac, N. Linden, and S. Popescu, Phys. Rev. Lett. \textbf{87}, 137901 (2001).

\bibitem{unitary2} B. Kraus and J. I. Cirac, Phys. Rev. A \textbf{63}, 062309 (2001).

\bibitem{prob} G. Vidal, Phys. Rev. Lett. \textbf{83}, 1046 (1999).

\bibitem{approx} G. Vidal, D. Jonathan, and M. A. Nielsen, Phys. Rev. A \textbf{62}, 012304 (2000).

\bibitem{ensemble} D. Jonathan and M. B. Plenio, Phys. Rev. Lett. \textbf{83}, 1455 (1999).

\bibitem{mixed} G. Vidal, Phys. Rev. A \textbf{62}, 062315 (2000).

\bibitem{ishizaka} S. Ishizaka and T. Hiroshima, Phys. Rev. A \textbf{62}, 022310 (2000).

\bibitem{mems1} F. Verstraete, K. Audenaert, and B. De Moor, Phys. Rev. A \textbf{64}, 012316 (2001).


\bibitem{eof} C. H. Bennett, D. P. DiVincenzo, J. A. Smolin, and W. K. Wootters, Phys. Rev. A \textbf{54}, 3824 (1996).

\bibitem{ree} V. Vedral, M. B. Plenio, M. A. Rippin, and P. L. Knight, Phys. Rev. Lett. \textbf{78}, 2275 (1997).

\bibitem{negativity} G. Vidal and R. F. Werner, Phys. Rev. A \textbf{65}, 032314 (2002).

\bibitem{eof2} W. K. Wootters, Phys. Rev. Lett. \textbf{80}, 2245 (1998).

\bibitem{open1} O. Kr\"uger and R. F. Werner, arXiv:quant-ph/0504166 (2005).

\bibitem{open2} \verb"https://oqp.iqoqi.oeaw.ac.at/".

\bibitem{Rains1} E. M. Rains, arXiv:quant-ph/9707002v3 (1997).

\bibitem{Rains2} E. M. Rains, Phys. Rev. A \textbf{60}, 173 (1999).

\bibitem{Rains3} E. M. Rains, Phys. Rev. A \textbf{60}, 179 (1999); Phys. Rev. A \textbf{63}, 019902(E) (2000).

\bibitem{Rains4} E. M. Rains, IEEE Trans. Inf. Theory \textbf{47}, 2921 (2001).

\bibitem{harrownielsen} A. W. Harrow and M. A. Nielsen, Phys. Rev. A \textbf{68}, 012308 (2003).

\bibitem{brandaoplenio1} F. G. S. L. Brand\~{a}o and M. B. Plenio, Nat. Phys. \textbf{4}, 873 (2008).

\bibitem{brandaoplenio2} F. G. S. L. Brand\~{a}o and M. B. Plenio, Commun. Math. Phys. \textbf{295}, 829 (2010).

\bibitem{brandaodatta} F. G. S. L. Brand\~{a}o and N. Datta, IEEE Trans. Inf. Theory \textbf{57}, 1754 (2011).

\bibitem{patricia} P. Contreras-Tejada, C. Palazuelos, and J. I. de Vicente, Phys. Rev. Lett. \textbf{122}, 120503 (2019).

\bibitem{beyondlocc} E. Chitambar, J. I. de Vicente, M. W. Girard, and G. Gour, J. Math. Phys. \textbf{61}, 042201 (2020).

\bibitem{lamiregula1} L. Lami and B. Regula, Nat. Phys. \textbf{19}, 184 (2023).

\bibitem{brandaoplenio3} M. Berta, F. G. S. L. Brand\~{a}o, G. Gour, L. Lami, M. B. Plenio, B. Regula, and M. Tomamichel, Quantum \textbf{7}, 1103 (2023).

\bibitem{lamiregula2} L. Lami and B. Regula, arXiv:2307.11008 (2023).

\bibitem{locc2} M. Hebenstreit, M. Englbrecht, C. Spee, J. I. de Vicente, and B. Kraus, New J. Phys. \textbf{23}, 033046 (2021).

\bibitem{belldiagonalstates} R. Horodecki and M. Horodecki, Phys. Rev. A \textbf{54}, 1838 (1996).

\bibitem{isotropic} M. Horodecki and P. Horodecki, Phys. Rev. A \textbf{59}, 4206 (1999).


\bibitem{vidalmonotones} G. Vidal, J. Mod. Opt. \textbf{47}, 355 (2000).

\bibitem{mes} J. I. de Vicente, C. Spee, and B. Kraus, Phys. Rev. Lett. \textbf{111}, 110502 (2013).




\end{thebibliography}
\end{document}